\newcommand{\Rbb}{\mathbb{R}}
\newcommand{\scp}[2]{\langle #1, #2 \rangle}
\newcommand{\Nbb}{\mathbb{N}}
\newtheorem{theorem}{Theorem}%[section]
\newtheorem{lemma}{Lemma}
\newcommand{\inv}[1]{\frac{1}{#1}}
\newcommand{\supp}{{\rm supp}\,}
\newcommand{\tinv}[1]{{\textstyle\frac{1}{#1}}}
\newcommand{\sign}{{\rm sign}\,}
\DeclareMathOperator*{\argmin}{argmin}
\DeclareMathOperator{\prox}{prox}
\newcommand{\Proj}{\mathcal{P}}
\DeclareMathOperator{\Id}{Id}
\DeclareMathOperator{\dom}{dom}
\newcommand{\ie}{\mbox{i.e.~}}
\newcommand{\norm}[1]{\|#1\|}
\newcommand{\Hm}{\mathcal{H}}
\newcommand{\iBPDN}{\textrm{\mbox{\emph{i}\hspace{.5pt}BPDN}}}
\newcommand{\Ncal}{\mathcal{N}}
\newcommand{\cl}{\mathcal}
\newcommand{\wt}{\widetilde}
\title{A Short Note on Compressed Sensing\\ with Partially Known
  Signal Support\\[3.5mm]
\footnotesize Technical Report: TR-LJ-2009.02 (Updated Version)} \date{\today}
\author{L. Jacques\footnote{{\tt
laurent.jacques@uclouvain.be}. Research supported by Belgian National Science Foundation (F.R.S.-FNRS).}\\[2mm]
  \footnotesize Communications and Remote Sensing Laboratory (TELE)\\
  \footnotesize Universit\'e catholique de Louvain (UCL)\\
  \footnotesize Louvain-la-Neuve, Belgium}
\begin{document}

\maketitle

\begin{abstract}
This short note studies a variation of the Compressed Sensing
paradigm introduced recently by Vaswani et al., \ie the recovery of
sparse signals from a certain number of linear measurements when the
signal support is partially known. The reconstruction method is
based on a convex minimization program coined \emph{innovative}
Basis Pursuit DeNoise (or \iBPDN). Under the common
$\ell_2$-fidelity constraint made on the available measurements,
this optimization promotes the ($\ell_1$) sparsity of the candidate
signal over the complement of this known part.

In particular, this paper extends the results of Vaswani et al. to the
cases of compressible signals and noisy measurements. Our proof relies
on a small adaption of the results of Candes in 2008 for
characterizing the stability of the Basis Pursuit DeNoise (BPDN) program.

We emphasize also an interesting link between our method and the
recent work of Davenport et al. on the $\delta$-stable embeddings and
the \emph{cancel-then-recover} strategy applied to our problem. For
both approaches, reconstructions are indeed stabilized when the
sensing matrix respects the Restricted Isometry Property for the same
sparsity order.

We conclude by sketching an easy numerical method relying on monotone
operator splitting and proximal methods that iteratively solves \iBPDN.
\end{abstract}

\begin{quote} 
  \footnotesize \em Keywords: Sparse Signal Recovery, Compressed Sensing, Convex
  Optimization, Instance Optimality.
\end{quote}

\section{Introduction}
\label{sec:intro}

The theory of Compressed Sensing (CS)
\cite{candes2006qru,donoho2006cs} aims at reconstructing sparse or
compressible signals from a small number of linear measurements
compared to the dimensionality of the signal space.  In short, the
signal reconstruction is possible if the underlying sensing matrix is
well behaved, \ie if it respects a Restricted Isometry Property (RIP)
saying roughly that any small subset of its columns is ``close'' to an
orthogonal basis. The signal recovery is then obtained using
non-linear techniques based on convex optimization promoting signal
sparsity, as the Basis Pursuit DeNoise (BPDN) program
\cite{donoho2006cs,Chen98atomic}. What makes CS more than merely an
interesting theoretical concept is that some classes of randomly
generated matrices (e.g. Gaussian, Bernoulli, partial Fourier
ensemble, etc) satisfy the RIP with overwhelming probability. This
happens as soon as their number of rows, \ie the number of CS
measurements, is higher than a few multiples of the assumed signal
sparsity.

In this paper we are interested in a variation of the CS paradigm. We
assume indeed that the support of the signal to recover is partially
known, possibly with a certain error.  As explained in
\cite{vaswani5066modified,Vaswani2009}, this context is indeed well
suited to the recovery of (time) sequences of sparse signals when
their supports evolves slowly over time. In that case, the support of
the recovered signal in a previous (discretized) time can be used to
improve the reconstruction of the signal at the next time instance,
either by decreasing the required number of measurements for a given
quality, or by improving the reconstruction quality for a fixed number
of measurements.  Recovering a signal with partially known support is
also of interest for certain kind of 1-D signals or images. For
instance, photographic images, \ie with positive intensities, have
often many non-zero approximation coefficients in their wavelet
decomposition \cite{mall99}; a prior knowledge that can be favorably
used in their reconstruction from CS measurements.

By adapting the proof of \cite{candes2008rip}, we show in this short
note that the recovery algorithm minimizing the $\ell_1$-norm of the
signal candidate over the complement of the known support part, \ie
what we coin \emph{innovative} Basis Pursuit DeNoising (\iBPDN), has a
similar stability behavior than the common Basis Pursuit DeNoise
program. In particular, this extends the result of
\cite{vaswani5066modified,Vaswani2009} to the cases of noisy
measurements and of compressible signals, \ie with non-zero but fast
decaying coefficients in a given sparsity basis. We show also that our
method shares somehow the conclusion of the \emph{cancel-then-recover}
strategy designed in \cite{Davenport_M_2010_article_sigproccompress}
where Authors propose a recovery algorithm that applies an orthogonal
projection to separate the measurements into two components, and then
recovers the known support part of the signal separately from the
unknown support component.

\section{Framework and Notations}
\label{sec:fmwk}

Let $x=\Psi\alpha\in\Rbb^n$ be a sparse or a compressible discrete
signal in the sparsity basis $\Psi\in\Rbb^{n\times n}$ of $\Rbb^n$,
\ie the vector $\alpha\in\Rbb^n$ has few non-zero or fast decaying
components respectively. For the sake of simplicity, we work hereafter
with the canonical basis, \ie $\Psi=\Id$, identifying $\alpha$ with
$x$. The present work is however valid for any orthonormal $\Psi$,
e.g. the DCT or the Wavelet basis, by integrating $\Psi$ in the
sensing model described in Section \ref{sec:sensing-model}.

We now establish some important notations. We write
$\Ncal=\{1,\,\cdots,n\}$ the index set of the vector components in
$\Rbb^n$.  For any vector $u\in\Rbb^n$, $u_i$ is the $i^{\rm th}$
component of $u$ with $i\in\mathcal{N}$, $u_S$ is the vector equal to
the components of $u$ on the set $S\subset\Ncal$ and to 0 elsewhere,
while $u^l$, with uppercase index $l\in\Nbb$ to avoid confusion, is
the vector obtained by zeroing all but the $l$ largest components of
$u$ (in amplitude). For non-trivial basis $\Psi$, $u^l$ would be the
best $l$-term approximation of $u$ in the $\ell_2$-norm sense. The
complement of any set $S\subset\Ncal$ is denoted by $S^c =
\Ncal\setminus S$, and the size of $S$ by $\#S$. The $\ell_p$ norm
(for $p\geq 1$) of $u\in\Rbb^n$ is $\|u\|_p^p=\sum_i |u_i|^p$, while
its support is written $\supp u \triangleq \{i\in\Ncal:u_i\neq
0\}$. By extension, the $\ell_0$ ``norm''\footnote{It is not actually
a true norm since for instance it is not positive homogeneous.} is defined as
$\|u\|_0=\#\,\supp u$.

Let us speak now of the prior knowledge that we have on the signal. In
addition to the assumption of sparsity or compressibility, we presume
that the support of the signal $x$ is partially known. In the sequel,
we denote the known support part by $T\subset\Ncal$, while we always
refer to its size by the letter $s=\#T$. Notice that in our study
nothing prevents $T$ to be corrupted by some ``noise'', \ie a priori
$T$ is not fully included to $\supp x$. Moreover, the size of $(\supp
x) \setminus T$ is not constrained, what will matter is the values of
the components of $x$ on $(\supp x) \setminus T$, \ie the
compressibility of $x$ outside of $T$.

\section{Sensing Model}
\label{sec:sensing-model}

Following the common Compressed Sensing model, our vector $x$ is
acquired by a sensing matrix $\Phi\in\Rbb^{m\times n}$ subject to an
additional white noise $n\in\Rbb^m$, \ie
$$
y\ =\ \Phi x + n, 
$$
where $y\in\Rbb^m$ is the measurement vector. In this model the noise
power is assumed bounded\footnote{Possibly with high probability.} by
$\epsilon$, $\|n\|_2\leq \epsilon$.

As shown after, even if a part of the signal support is known, the
stability of this sensing model, \ie our ability to recover or
approximate $x$ from $y$, is also linked to the \emph{Restricted Isometry
Property} (RIP) of the sensing matrix
\cite{candes2005dlp,candes2006qru,candes2006ssr}. 

Explicitly, the matrix $\Phi\in\Rbb^{m\times n}$ satisfies the RIP of order
$q\in\Nbb$ ($q\leq n$) and radius $0 \leq \delta_q < 1$, if
$$
(1-\delta_q)\|u\|_2^2\ \leq\ \|\Phi u\|_2^2\ \leq\ (1+\delta_q)\|u\|^2_2,
$$
for all $q$-sparse vectors $u\in\Rbb^n$, \ie with $\|u\|_0 \leq q$.

\section{Reconstructing on Innovation}
\label{sec:rec-meth}

Intuitively, if a part $T\subset\Ncal$ of the signal support is known,
a possible (non-linear) reconstruction technique of $x$ would simply
consist in minimizing the sparsity of a signal candidate $u\in\Rbb^n$
over $T^c$, \ie the $\ell_0$-norm of $u_{T^c}$, subject to the common
$\ell_2$ fidelity constraint $\|\Phi u - y\|_2\leq \epsilon$ as
prescribed by the noise power bound. As underlined many times in the
community, such a procedure would result in a combinatorial (NP-hard)
problem \cite{natarajan1995sas}. Here again an $\ell_1$
\emph{relaxation} must be used, with possibly additional requirements
on the RIP-``conditioning'' of $\Phi$ \cite{Tropp2006,candes2006ssr}.

The proposed method is a simple extension of the Modified-CS scheme
defined in \cite{vaswani5066modified,Vaswani2009}. We integrate indeed
the case of corrupted measurements by defining the following
optimization program, coined \emph{innovative} Basis Pursuit DeNoising
(\iBPDN),
\begin{equation*}
  \label{eq:IBPDN}
  \argmin_u \norm{u_{T^c}}_1\ {\rm s.t.}\ \norm{y - \Phi u}_2\leq
  \epsilon. \eqno{({\bf \iBPDN})}
\end{equation*}
The term ``innovative'' recalls that this program tries to minimize
the sparsity of the signal to be reconstructed in the unknown (or
innovation) set $(\supp x)\setminus T$ included to $T^c$.

\section{{\em i\hspace{1pt}}BPDN\ and $\ell_2-\ell_1$ Instance Optimality }
\label{sec:ibpdn-ell_2-ell_1}

The main result of this note provides the conditions under which the
solution of \iBPDN\ is close or equal to the initial signal $x$,
\ie the so-called $\ell_2-\ell_1$ instance optimality
\cite{Cohen-bestkterm}. It extends in the same time the conclusion of
\cite{vaswani5066modified,Vaswani2009} to the cases of noisy measurements and
compressible signals.

\begin{theorem}
\label{thm:l2l1-inst}
Under the condition of the sensing model described above, writing $\#T
= s$ and given $k\in\Nbb$, let us assume that the matrix $\Phi$
respects the RIP of order $s+2k$ with radius $\delta_{s+2k}\in (0,1)$, and that
its radius for the smaller order $2k$ is $\delta_{2k}\in (0,1)$. Then, if
  $\delta_{2k}^2 + 2 \delta_{s+2k} < 1$, \iBPDN\ has the
  $\ell_2-\ell_1$ instance optimality meaning that its solution $x^*$
  respects
  $$
\|x-x^*\|_2\ \leq\ C_{s,k}\,\epsilon\ +\ D_{s,k}\,e_{0}(r;k),
$$
where $r$ is the residual $r=x - x_T$, and $e_0(r;k)=k^{-1/2}\|r -
r^k\|_1$ is the compressibility error\footnote{It could be called also
  \emph{scaled} $\ell_1$-\emph{approximation error}.} at $k$-term of
$r$.  The two constants $C_{s,k}$ and $D_{s,k}$, given in the proof,
depend on $\Phi$ only. For instance, for small innovation, \ie when
$k\ll s$, if $\delta_{2k}=0.02$ and if $\delta_{s+2k}=0.2$,
$C_{s,k}<7.32$ and $D_{s,k}<3.35$.
\end{theorem}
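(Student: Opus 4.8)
The plan is to mimic the BPDN stability proof of \cite{candes2008rip}, the one genuine novelty being a decomposition of the error vector that absorbs the known support $T$ into the ``kept'' block, combined with a two-radii accounting of the RIP cross terms that uses both $\delta_{s+2k}$ and the smaller $\delta_{2k}$.

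First I would set $h=x^*-x$. Since $\norm{y-\Phi x}_2=\norm{n}_2\leq\epsilon$, the signal $x$ is feasible for \iBPDN; this legitimises the optimality comparison $\norm{x^*_{T^c}}_1\leq\norm{x_{T^c}}_1=\norm{r}_1$ (recall $r=x-x_T=x_{T^c}$), and feasibility of both $x$ and $x^*$ gives $\norm{\Phi h}_2\leq 2\epsilon$. Next I would introduce the partition: let $T_0=\supp(r^k)\subset T^c$ be the best $k$-term support of $r$, set $\wt T_0=T\cup T_0$ (size $\leq s+k$), and split $T^c\setminus T_0$ into blocks $T_1,T_2,\dots$ of size $k$ ordered by decreasing $|h_i|$, so $h=h_{\wt T_0\cup T_1}+\sum_{j\geq 2}h_{T_j}$ with $\#(\wt T_0\cup T_1)\leq s+2k$. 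Expanding $\norm{x^*_{T^c}}_1$ separately on $T_0$ and on $T^c\setminus T_0$ and using the triangle inequality yields the cone estimate $\norm{h_{T^c\setminus T_0}}_1\leq\norm{h_{T_0}}_1+2\norm{r-r^k}_1=\norm{h_{T_0}}_1+2\sqrt{k}\,e_0(r;k)$, whence, by the usual sorted-tail inequality $\norm{h_{T_{j+1}}}_2\leq k^{-1/2}\norm{h_{T_j}}_1$,
\[
\sum_{j\geq 2}\norm{h_{T_j}}_2\ \leq\ k^{-1/2}\norm{h_{T^c\setminus T_0}}_1\ \leq\ \norm{h_{\wt T_0\cup T_1}}_2+2\,e_0(r;k).
\]

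The core step is the RIP ``tube'' argument. Writing $\Phi h_{\wt T_0\cup T_1}=\Phi h-\sum_{j\geq 2}\Phi h_{T_j}$ and expanding $\norm{\Phi h_{\wt T_0\cup T_1}}_2^2$, I would bound $\scp{\Phi h_{\wt T_0\cup T_1}}{\Phi h}\leq 2\epsilon\sqrt{1+\delta_{s+2k}}\,\norm{h_{\wt T_0\cup T_1}}_2$, and for each $j\geq 2$ split $\scp{\Phi h_{\wt T_0\cup T_1}}{\Phi h_{T_j}}=\scp{\Phi h_{\wt T_0}}{\Phi h_{T_j}}+\scp{\Phi h_{T_1}}{\Phi h_{T_j}}$; the standard RIP consequence $|\scp{\Phi u}{\Phi v}|\leq\delta_q\norm{u}_2\norm{v}_2$ for disjointly supported $u,v$ with $\#(\supp u\cup\supp v)\leq q$ applies with $q=s+2k$ to the first term and $q=2k$ to the second, after which Cauchy--Schwarz on $(\delta_{s+2k},\delta_{2k})$ against $(\norm{h_{\wt T_0}}_2,\norm{h_{T_1}}_2)$ collapses the sum to $\sqrt{\delta_{s+2k}^2+\delta_{2k}^2}\,\norm{h_{\wt T_0\cup T_1}}_2\norm{h_{T_j}}_2$. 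Together with the lower bound $(1-\delta_{s+2k})\norm{h_{\wt T_0\cup T_1}}_2^2$ and the tail estimate above, this gives
\[
\bigl(1-\delta_{s+2k}-\sqrt{\delta_{s+2k}^2+\delta_{2k}^2}\,\bigr)\norm{h_{\wt T_0\cup T_1}}_2\ \leq\ 2\epsilon\sqrt{1+\delta_{s+2k}}+2\sqrt{\delta_{s+2k}^2+\delta_{2k}^2}\,e_0(r;k),
\]
and the leading coefficient $\kappa$ is positive exactly when $(1-\delta_{s+2k})^2>\delta_{s+2k}^2+\delta_{2k}^2$, i.e. when $\delta_{2k}^2+2\delta_{s+2k}<1$ — the theorem's hypothesis. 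Combining with $\norm{h}_2\leq\norm{h_{\wt T_0\cup T_1}}_2+\sum_{j\geq 2}\norm{h_{T_j}}_2\leq 2\norm{h_{\wt T_0\cup T_1}}_2+2e_0(r;k)$ yields the stated inequality with $C_{s,k}=4\kappa^{-1}\sqrt{1+\delta_{s+2k}}$ and $D_{s,k}=4\kappa^{-1}\sqrt{\delta_{s+2k}^2+\delta_{2k}^2}+2$, which for $\delta_{2k}=0.02$, $\delta_{s+2k}=0.2$ give $\kappa\approx 0.599$, $C_{s,k}<7.32$, $D_{s,k}<3.35$.

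The main obstacle is precisely this cross-term bookkeeping: treating $\wt T_0\cup T_1$ as a single block of size $s+2k$ would only yield a condition of the form $\delta_{s+2k}<\sqrt{2}-1$ with larger constants, so the point is to keep the two inner products — and their two distinct isometry radii — separate and merge them only through Cauchy--Schwarz. Minor care is also needed to ensure $r$ is supported in $T^c$ (so that $r^k$ and $r_{T^c\setminus T_0}$ are the genuine best-$k$-term pieces of $r$) and that the possibly corrupted $T$ does not spoil the feasibility of $x$, which is what underpins the optimality comparison $\norm{x^*_{T^c}}_1\leq\norm{x_{T^c}}_1$.
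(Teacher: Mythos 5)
Your proposal is correct and follows essentially the same route as the paper's own proof: the same partition $T\cup T_0\cup T_1\cup\cdots$ absorbing $T$ into the kept block, the same cone estimate, the same split of the cross inner products into a $\delta_{s+2k}$ part and a $\delta_{2k}$ part merged via Cauchy--Schwarz into $\mu_{s,k}=\sqrt{\delta_{s+2k}^2+\delta_{2k}^2}$, and the same constants $C_{s,k}$ and $D_{s,k}$. Nothing further to add.
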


\begin{proof}
  We basically adapt the proof of \cite{candes2008rip} to signal
  with partially known support.

  We define the residual $r = x - x_T$, with $\supp r = (\supp x)
  \setminus T$. Let us write $x^*=x+h$ with $h\in\Rbb^n$ so that the
  proof amounts to bound $\|h\|_2$. Let $T_0$ be the support of the
  $k$ largest coefficients of the residual $r = x - x_T$,
  \ie $T_0=\supp r^k$ with $T_0\,\cap\, T = \emptyset$. 

  We define next the sets $T_j$ for $j\geq 1$ as the support of the
  $k$ largest coefficients of $h_{S_j^c}=h-h_{S_j}$ with $S_j=
  T\,\cup\,\bigcup_{l=0}^{j-1}T_l$. By construction, we may observe
  that we got the partition $\bigcup_{l\geq 0} T_l = (\supp
  x)\setminus T$, with $\#T_j=k$ and $T_j\cap T =T_j\cap T_{j'}
  =\emptyset$, for $j,j'\geq 0$ and $j\neq j'$.

  Let us write $T_{|0}=T\cup T_0$ and $T_{|01}=T\cup T_0\cup T_1$,
  with $\#T_{|0}=s+k$ and $\#T_{|01}=s+2k$. The plan of the proof is
  to first bound $\|h_{T_{|01}^c}\|_2$ and then $\|h_{T_{|01}}\|_2$.

  Using the triangular inequality, we have $\|h_{T_{|01}^c}\|_2\leq
  \sum_{j\geq 2} \|h_{T_j}\|_2$. For $j\geq 1$, $\|h_{T_j}\|_1\geq
  k\|h_{T_{j+1}}\|_\infty$ by the ordering of the $T_j$'s, and
  therefore $\|h_{T_{j+1}}\|^2_2\leq k\|h_{T_{j+1}}\|_\infty^2 \leq
  \inv{k}\|h_{T_j}\|^2_1$. This leads to
\begin{equation}
\label{eq:first-h_Tb01-bound}
\|h_{T_{|01}^c}\|_2\ \leq\ \tinv{\sqrt{k}}\,\sum_{j\geq
  1}\|h_{T_j}\|_1\ =\ \tinv{\sqrt{k}}\|h_{T_{|0}^c}\|_1.
\end{equation}
Since $T^c=T_0\cup T_{|0}^c$ and $\|n\|_2=\|y-\Phi
x\|_2\leq\epsilon$, and because $x^*$ solves \iBPDN, we have
$$
\|x_{T^c}\|_1\geq \|x_{T^c} + h_{T^c}\|_1 = \|x_{T_0} + h_{T_0}\|_1\ +\ \|x_{T_{|0}^c} + h_{T_{|0}^c}\|_1
\geq \|x_{T_0}\|_1 - \|h_{T_0}\|_1 - \|x_{T_{|0}^c}\|_1 +
\|h_{T_{|0}^c}\|_1,
$$
and therefore,
$$
\|h_{T_{|0}^c}\|_1 \leq \|x_{T^c}\|_1 +
\|x_{T_{|0}^c}\|_1 + \|h_{T_0}\|_1 - \|x_{T_0}\|_1\\
= 2\|x_{T_{|0}^c}\|_1 + \|h_{T_0}\|_1 = 2\|r - r_{T_0}\|_1 + \|h_{T_0}\|_1.
$$
Consequently, using (\ref{eq:first-h_Tb01-bound}) and the equivalence
of the norms $\ell_2$ and $\ell_1$, we get
\begin{equation}
\label{eq:bound-on-sum-j-gt-2}
\|h_{T_{|01}^c}\|_2\leq
  \sum_{j\geq 2} \|h_{T_j}\|_2 \leq\ 2e_0(r;k) + \|h_{T_0}\|_2.
\end{equation}

Let us now bound $\|h_{T_{|01}}\|_2$. Notice that $h_{T_{|01}} = h -
\sum_{j\geq 2} h_{T_j}$, so that, using Cauchy-Schwarz,
\begin{align*}
  \|\Phi h_{T_{|01}}\|_2^2&=\ \scp{\Phi h_{T_{|01}}}{\Phi h_{T_{|01}}}\\
  &=\ \scp{\Phi h_{T_{|01}}}{\Phi h} - \scp{\Phi h_{T_{|01}}}{\textstyle\sum_{j\geq 2} \Phi h_{T_j}}\\
  &\leq\ \|\Phi h_{T_{|01}}\|_2\|\Phi h\|_2 + \textstyle\sum_{j\geq 2}
  |\scp{\Phi h_{T_{|01}}}{\Phi h_{T_j}}|.
\end{align*}

By hypothesis, $\Phi$ is RIP of order $q$ and radius $\delta_q$ with
$q\in\{2k, s+2k\}$. It is proved in \cite{candes2008rip} as a result
of the polarization identity, that, for two vectors $u$ and $v$ of
disjoint supports and of sparsity $l$ and $l'$ respectively, if
$\Phi$ is RIP of order $l+l'$, then $|\scp{\Phi u}{\Phi v}|\leq
\delta_{l+l'}\|u\|_2\|v\|_2$. In addition, since $x^*$ is solution of
\iBPDN\ and $x$ is a feasible point of its fidelity constraint,
$\|\Phi h\|_2\leq \|\Phi x^* - y\|_2 + \|y - \Phi x\|_2 \leq
2\epsilon$. Therefore, combining all these considerations,
\begin{multline*}
  \|\Phi h_{T_{|01}}\|_2^2\ \leq\
  2\sqrt{1+\delta_{s+2k}}\,\epsilon\,\|h_{T_{|01}}\|_2 + \sum_{j\geq 2}
  |\scp{\Phi h_{T_{|0}} + \Phi h_{T_1}}{\Phi h_{T_j}}|\\
  \leq\ 2\sqrt{1+\delta_{s+2k}}\,\epsilon\,\|h_{T_{|01}}\|_2 + \big(\delta_{s+2k}\|h_{T_{|0}}\|_2 +
  \delta_{2k}\|h_{T_1}\|_2\big)\,{\sum_{j\geq
      2}}\|h_{T_j}\|_2\\
  \leq 2\sqrt{1+\delta_{s+2k}}\,\epsilon\,\|h_{T_{|01}}\|_2 +\ \mu_{s,k}\,\|h_{T_{|01}}\|_2\,{\sum_{j\geq
      2}}\|h_{T_j}\|_2,\\[-7mm]
\end{multline*}
with $\mu_{s,k}=\sqrt{\delta_{s+2k}^2 + \delta_{2k}^2}$.

Since $(1-\delta_{s+2k})\|h_{T_{|01}}\|_2^2\leq \|\Phi
h_{T_{|01}}\|_2^2$, simplifying the last expression and using
(\ref{eq:bound-on-sum-j-gt-2}) lead to 
$$
(1-\delta_{s+2k})\,\|h_{T_{|01}}\|_2  \leq\ 2\sqrt{1+\delta_{s+2k}}\,\epsilon\ +\
\mu_{s,k}\,\big(2e_0(r;k) + \|h_{T_0}\|_2\big),
$$
or, since $\|h_{T_0}\|_2\leq\|h_{T_{|01}}\|_2$,
$$
\|h_{T_{|01}}\|_2\ \leq\ \alpha\epsilon\ +\
\beta e_0(r;k),
$$
with
$\alpha= 2\sqrt{1+\delta_{s+2k}}\,/\,(1-\delta_{s+2k}-\mu_{s,k})$ and
$\beta={2\mu_{s,k}}\,/\,(1-\delta_{s+2k}-\mu_{s,k})$.

Finally, using again (\ref{eq:bound-on-sum-j-gt-2}),
$$
  \|h\|_2\ \leq\ \|h_{T_{|01}}\|_2 + \|h_{T_{|01}^c}\|_2\ \leq\
  \alpha\epsilon\ +\ (\beta + 2) e_0(r;k)\ +\ \|h_{T_{0}}\|_2\ \leq \
  C_{s,k}\,\epsilon\ +\ D_{s,k}\, e_0(r;k),
$$
with 
$$
C_{s,k} = \frac{4\sqrt{1+\delta_{s+2k}}}{1-\delta_{s+2k}-\mu_{s,k}}, 
$$
and
$$
D_{s,k} = 2\,\frac{1 + \mu_{s,k} - \delta_{s+2k}}{1-\delta_{s+2k}-\mu_{s,k}}.
$$
The denominator of these two constants makes sense only if $1 -
\delta_{s+2k} - \mu_{s,k} > 0$, \ie if $\delta_{2k}^2 + 2
\delta_{s+2k}~<~1$, which provides the announced reconstruction
condition.
\end{proof}

\section{Observations}
\label{sec:obs}

Some observations may be realized from Theorem
\ref{thm:l2l1-inst}. First, in the case where there is no knowledge
about the signal support, \ie $T=\emptyset$ and $s=0$, we do find the
previous sufficient condition of \cite{candes2008rip} characterizing
when BPDN satisfies the $\ell_2-\ell_1$ instance optimality, namely
$\delta_{2k}<\sqrt{2}-1$ as involved by $\delta_{2k}^2 + 2\delta_{2k}
< 1$.

Second, the condition $\delta_{2k}^2 + 2\delta_{s+2k} < 1$ is
satisfied if $\delta_{s+2k}<\sqrt{2}-1$ since we have always
$\delta_{2k}<\delta_{s+2k}$. This seems again a simple generalization
of the previous result in \cite{candes2008rip}, \ie \iBPDN\ is stable
if the RIP of $\Phi$ is guaranteed over the sparsity order $s+2k$ with
a radius $\delta_{s+2k}<\sqrt{2}-1$. Intuitively, the matrix must be
sufficiently ``well conditionned'' to estimate both the unknown values
of $x$ on the known set $T$ and the $k$ other significant values of
$x$ somewhere outside of $T$. This induces somehow the required $s+2k$
RIP sparsity order, where $s$ and $2k$ stand for the degrees of
freedom of $x$ on $T$ and on $T^c$ respectively.

Third, if the signal $x$ is exactly sparse, there is a $k<N-s$ such
that $k=\#\big((\supp x)\setminus T\big)$ and $e_0(r;k)=0$. Without
noise on the measurements, the previous theorem guarantees therefore
the perfect reconstruction of the signal, \ie $x^*=x$, as obtained in
\cite{vaswani5066modified}.

Finally, the compressibility of the signal $x$ is quantified by the
compressibility error $e_0(r,k)$. In other words, the compressibility
is measured from $r=x-x_T$ outside of the known support part $T$ of
$x$. This new measure is of course the simple generalization of the
previous term $e_0(k)=k^{-1/2}\|x-x^k\|_1=e_0(x;k)$ introduced for
instance in \cite{candes2008rip}.

\section{Connection to $\delta$-stable Embeddings and
the Cancel-then-Recover strategy}
\label{sec:related-work}

Theorem \ref{thm:l2l1-inst} has an interesting connection with the
recent work of Davenport et
al.~\cite{Davenport_M_2010_article_sigproccompress} showing that
several signal processing tasks, \ie signal detection, classification,
estimation and filtering, can be realized efficiently on the
compressive measurements of a signal without reconstructing it. In
their work, the Authors study in particular the possibility to
subtract from these measurements the influence of the known part of
the signal support. Let us briefly explain that work before to compare
our work with this of \cite{Davenport_M_2010_article_sigproccompress}.

For this explanation, we use the framework of Section \ref{sec:fmwk}
with the simplifying canonical basis $\Psi = \Id$ and the pure sensing
model $y=\Phi x$. We define also the subspace
$\Sigma_T=\{u\in\Rbb^n:\supp u\subset T\}$ and the matrix $\Omega =
\Phi_T\in \Rbb^{m\times s}$, \ie the restriction of $\Phi$ to the
columns indexed in $T\subset \cl N$. Two operators can be built from
$\Omega$ and its Moore-Penrose pseudoinverse
$\Omega^\dagger=(\Omega^T\Omega)^{-1}\Omega^T$, \ie $P_{\Omega}\ =\
\Omega\Omega^{\dagger}$ and $P_{\Omega^\perp} = 1\ -\
\Omega\Omega^{\dagger}$, the orthogonal projectors on the range of
$\Omega$ and on the nullspace of $\Omega^T$ respectively.

Writing $x=x_T+x_{T^c}$, we can notice that $P_{\Omega^\perp}\Phi
x=P_{\Omega^\perp}\Phi x_{T^c}$. In short, the influence (or
interference) of $x_T$ on $y=\Phi x$ may be canceled without
reconstructing $x$. The idea of the \emph{cancel-then-recover}
strategy promoted in \cite{Davenport_M_2010_article_sigproccompress}
is therefore to reconstruct actually $x_{T^c}$ from $\wt y = \wt
\Phi x = \wt \Phi x_{T^c}$, with $\wt \Phi =
P_{\Omega^\perp}\Phi $. This can be done for instance by solving
either the Basis Pursuit program
$$
  \wt x\ =\ \argmin_u \norm{u}_1\ {\rm s.t.}\ \wt y = \wt \Phi u,
$$
or an equivalent greedy method as CoSaMP
\cite{cosamp,Davenport_M_2010_article_sigproccompress}. Of course,
$\wt x_{T}=0$ since this part of $\wt x$ does not contribute to the
fidelity constraint. It is equivalent to say that the reconstruction
runs over the space $P_{\Id_{T}^\perp}\Rbb^n$, where
$P_{\Id_{T}^\perp} u = u_{T^c}$ for any $u\in\Rbb^n$. Therefore, the
estimation error between $\wt x$ and $x$ can be bounded over $T^c$.

For this purpose $\wt \Phi$ must be characterized in function of
$\Phi$. This can be done by considering a generalization the
Restricted Isometry Property: Given $\delta\in (0,1)$ and two spaces
$\cl U,\cl V\subset\Rbb^n$, a matrix $\Phi$ realizes a
$\delta$-\emph{stable embedding} of $(\cl U,\cl V)$ if
$$
(1-\delta)\,\|u-v\|^2_2\ \leq\ \|\Phi u - \Phi v\|_2^2\ \leq\ (1+\delta)\,\|u-v\|^2_2,
$$
for all $u\in\cl U$ and $v\in \cl V$. In particular the RIP of order
$q$ and radius $\delta_q$ is equivalent to a $\delta_q$-stable
embedding of $(\Sigma_q,\{0\})$, with
$\Sigma_q=\{u\in\Rbb^n:\,\|u\|_0\leq q\}$ the set of $q$-sparse
signals. The following result provides then the desired characterization.
\begin{lemma}[Corollary 4 in \cite{Davenport_M_2010_article_sigproccompress}]
\label{lem:conn-delta-stable}
  Suppose that $\Phi\in\Rbb^{m\times n}$ is a $\delta$-stable
  embedding of $(\Sigma_{2k},\Sigma_T)$. Then $\wt \Phi$ is a
  $\delta/(1-\delta)$-stable embedding of
  $(P_{\Id_{T}^\perp}\Sigma_{2k},\{0\})$.     
\end{lemma}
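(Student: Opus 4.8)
The plan is to read the claim directly off the orthogonal decomposition that defines $\wt\Phi$. Unfolding the definition of a stable embedding, I must show that for every $z\in P_{\Id_{T}^\perp}\Sigma_{2k}$ one has $(1-\tfrac{\delta}{1-\delta})\|z\|_2^2\le\|\wt\Phi z\|_2^2\le(1+\tfrac{\delta}{1-\delta})\|z\|_2^2$. The first thing to note is that any such $z$ is of the form $z=u_{T^c}$ with $\|u\|_0\le 2k$, hence $z$ is itself $2k$-sparse with $\supp z\subseteq T^c$; this is precisely what will let me feed $z$ into the hypothesis as the member of $\Sigma_{2k}$.

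Next I would set up the algebra of the projectors. Taking the zero vector as the element of $\Sigma_{2k}$ in the hypothesis gives $(1-\delta)\|v\|_2^2\le\|\Phi v\|_2^2$ for all $v\in\Sigma_T$, so $\Omega=\Phi_T$ has full column rank and $\Omega^\dagger$, $P_{\Omega}$, $P_{\Omega^\perp}$ are well defined, with $\Rbb^m=\ran\Omega\oplus\ker\Omega^T$ orthogonally. Since $P_{\Omega}\Phi z\in\ran\Omega=\Phi(\Sigma_T)$, there is a (unique) $v\in\Sigma_T$, namely the vector supported on $T$ with coordinates $\Omega^\dagger\Phi z$, such that $\Phi v=P_{\Omega}\Phi z$. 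Consequently $\wt\Phi z=P_{\Omega^\perp}\Phi z=\Phi z-\Phi v=\Phi(z-v)$, and because $P_{\Omega^\perp}\Phi z$ is orthogonal to $\ran\Omega\ni\Phi v$, Pythagoras gives $\|\wt\Phi z\|_2^2=\|\Phi z\|_2^2-\|\Phi v\|_2^2$.

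The two bounds then drop out. For the upper bound, $\|\wt\Phi z\|_2^2\le\|\Phi z\|_2^2\le(1+\delta)\|z\|_2^2$, applying the hypothesis to $z\in\Sigma_{2k}$ with $0\in\Sigma_T$, and $1+\delta\le\tfrac{1}{1-\delta}=1+\tfrac{\delta}{1-\delta}$. For the lower bound I keep $\wt\Phi z$ in the form $\Phi(z-v)$ and apply the hypothesis to the pair $(z,v)\in\Sigma_{2k}\times\Sigma_T$: since $\supp z$ and $\supp v$ are disjoint, $\|z-v\|_2^2=\|z\|_2^2+\|v\|_2^2\ge\|z\|_2^2$, whence $\|\wt\Phi z\|_2^2\ge(1-\delta)\|z-v\|_2^2\ge(1-\delta)\|z\|_2^2$; and $(1-\delta)^2\ge 1-2\delta$ gives $1-\delta\ge\tfrac{1-2\delta}{1-\delta}=1-\tfrac{\delta}{1-\delta}$. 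This is the asserted $\tfrac{\delta}{1-\delta}$-stable embedding of $(P_{\Id_{T}^\perp}\Sigma_{2k},\{0\})$ — in fact the same computation delivers the sharper constants $1\mp\delta$.

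The step I expect to need the most care is the reduction $P_{\Omega}\Phi z=\Phi v$ with $v\in\Sigma_T$, together with the observation that the hypothesis is the \emph{joint} $\delta$-stable embedding of the \emph{pair} $(\Sigma_{2k},\Sigma_T)$: this is exactly (and barely) what is needed to bound $\|\Phi(z-v)\|_2$ by $\|z-v\|_2$ for the ``mixed'' vector $z-v$, whereas isometry bounds imposed separately on $\Sigma_{2k}$ and on $\Sigma_T$ would not close the argument (or would cost extra factors). The orthogonality $P_{\Omega^\perp}\Phi z\perp\Phi v$ is the other point worth stating explicitly, since it is what turns a crude estimate of $\|v\|_2$ into the clean identity $\|\wt\Phi z\|_2^2=\|\Phi z\|_2^2-\|\Phi v\|_2^2$; the remainder is bookkeeping.
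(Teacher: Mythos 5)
The paper does not prove this lemma at all --- it is imported verbatim as Corollary~4 of \cite{Davenport_M_2010_article_sigproccompress} --- so there is no internal proof to compare against; your argument has to stand on its own, and it does. Every step checks out: $z\in P_{\Id_T^\perp}\Sigma_{2k}$ is indeed a $2k$-sparse vector supported in $T^c$, hence an admissible member of $\Sigma_{2k}$; the specialization $u=0$ of the hypothesis shows $\Omega=\Phi_T$ has full column rank so that $P_\Omega$, $P_{\Omega^\perp}$ exist and $P_\Omega\Phi z=\Phi v$ for a unique $v\in\Sigma_T$; the identity $\wt\Phi z=\Phi(z-v)$ together with $\|z-v\|_2^2=\|z\|_2^2+\|v\|_2^2\ge\|z\|_2^2$ (disjoint supports) turns the \emph{joint} embedding of the pair $(\Sigma_{2k},\Sigma_T)$ into the lower bound, and Pythagoras gives the upper bound for free. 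You correctly identify the one place where the pair structure of the hypothesis is essential: separate isometries on $\Sigma_{2k}$ and $\Sigma_T$ would not control $\|\Phi(z-v)\|_2$. Your route is arguably cleaner than the original one (which obtains the $1/(1-\delta)$ loss by explicitly bounding the interference term $\|P_\Omega\Phi z\|_2$ through the near-orthogonality of $\Phi\Sigma_{2k}$ and $\ran\Omega$), and it delivers the sharper constants $1\mp\delta$ in place of $1\mp\delta/(1-\delta)$; since the stated $\delta/(1-\delta)$-embedding is weaker, the lemma as quoted follows a fortiori, and your version even dispenses with the implicit requirement $\delta<1/2$ needed for $\delta/(1-\delta)<1$.
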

In particular, this Lemma implies that if $\Phi$ is RIP of order
$s+2k$ with radius $\delta_{s+2k}$, it is then a
$\delta_{s+2k}$-stable embedding of $(\Sigma_{2k},\Sigma_T)$, and
therefore, $\wt \Phi$ is RIP of order $2k$ and radius
$\delta_{s+2k}/(1-\delta_{s+2k})$ over the space
$P_{\Id_{T}^\perp}\Rbb^n\simeq \Rbb^{n-s}$. The $\ell_2-\ell_1$
instance optimality of the BP program \cite{candes2008rip} above holds
if $\delta'=\delta_{s+2k}/(1-\delta_{s+2k})<\sqrt{2}-1$, \ie if
$\delta_{s+2k}<(\sqrt{2}-1)/\sqrt{2}$. In that case,
\begin{equation}
\label{eq:ctr-bound}
\|x_{T^c}-\wt x_{T^c}\|_2\ \leq\ \wt D_{\delta'}\, e_0(x_{T^c},k)\ =\
\wt D_{\delta'}\, e_0(r,k),
\end{equation}
with $\wt D_{\delta'} = 2\,\frac{1 + (\sqrt 2 - 1)\delta'}{1-(\sqrt
  2 + 1)\delta'} = 2\,\frac{1 + (\sqrt 2 - 2)\delta_{s+2k}}{1-(\sqrt
  2 + 2)\delta_{s+2k}}$.

In this paper, we show that \iBPDN\ is optimal when $\delta^2_{2k} +
2\delta_{s+2k} < 1$. This condition is weaker than the one proposed in
\cite{Davenport_M_2010_article_sigproccompress},
i.e. $\delta_{s+2k}<(\sqrt{2}-1)/\sqrt{2}$, however it is interesting
to notice that both consider also the RIP of order $2s+k$ and both
are stable for compressible signals.  Moreover, \iBPDN\ gives
guarantees for the estimation of the whole signal and not only for its
behavior over $T^c$. Of course, if $x^*$ is the solution of \iBPDN\
(with $\epsilon=0$), we get similarly
$$
\|x_{T^c}- x^*_{T^c}\|_2\ \leq\ \|x - x^*\|_2\leq\ D_{s,k}\, e_0(r,k),
$$
with $D_{s,k} < 2\,\frac{1 + (\sqrt 2 - 1)\delta_{s+2k}}{1-(\sqrt 2 +
  1)\delta_{s+2k}} < \wt D_{\delta'}$.

We can remark also that, conversely to the current cancel-then-recover
strategy\footnote{Robustness of this strategy against an additional
  noise $n$ could be obtained by bounding the power of
  $P_{\Omega^\perp}n$ when $y=\Phi x + n$.}, \iBPDN\ provides
stability against noisy measurements. An open question is however that
$\Phi$ in \cite{Davenport_M_2010_article_sigproccompress} has not to
be really RIP of order $s+2k$ to valid (\ref{eq:ctr-bound}). As
reported in Lemma \ref{lem:conn-delta-stable}, $\Phi$ simply needs to
provide a $\delta$-stable embedding over $(\Sigma_{2k},\Sigma_T)$
which is weaker than asking the RIP of order $s+2k$. Given $k$ and
$m$, that second requirement holds possibly for a smaller radius
$\delta$ than the RIP radius $\delta_{s+2k}$.

\section{Numerical Method}
\label{sec:algo}

In this section, we sketch of a simple algorithm for the reader
interested in a numerical implementation of \iBPDN. This one relies on
monotone operator splitting and proximal methods
\cite{combettes2007drs,Fadili2009}.  At the heart of this procedure is
the definition of the \emph{proximity operator} of any convex function
$\varphi:\Rbb^n\to\Rbb$, \ie the unique solution of $\prox_\varphi(z)
= \arg\min_{u}\inv{2}\|u - x\|_2^2 + \varphi(z)$.

Both BPDN and \iBPDN\ are special cases of the general minimization
problem
\begin{equation*}
  \label{eq:convex-prob}
  \arg\min_{x \in \Hm}\ f(x) + g(x).\eqno{({\rm \bf P})}
\end{equation*}
For \iBPDN, $f(u)=\|u_{T^c}\|_1$ and $g(u) = \imath_{C(\epsilon)}(u)=
0$ if $u\in C(\epsilon)$ and $\infty$ otherwise, \ie the
\emph{indicator function} of the closed convex set
$C(\epsilon)=\{v\in\Rbb^n:\|y - \Phi v\|_2\leq \epsilon\}$.

Of course $f$ and $g$ are both non-differentiable, however, since (i)
their domain is non-empty, (ii) they are convex and (iii) lower
semi-continuous (lsc), \ie $\liminf_{u \to u_0} f(u) = f(u_0)$ for all
$u_0 \in \dom f$, \iBPDN\ can be solved by the following
Douglas-Rachford iterative method \cite{Fadili2009}:
\begin{equation}
\label{eq:DR-iter}
u^{(t+1)} = (1-\tfrac{\alpha_t}{2})\,u^{(t)} + 
\tfrac{\alpha_t}{2}\,S^\odot_{\gamma}\circ\Proj^\odot_{C(\epsilon)}(u^{(t)}),
\end{equation}
where $A^\odot \triangleq 2A - \Id$ for any operator $A$, $\alpha_t
\in (0,2)$ for all $t \in \mathbb{N}$, $S_{\gamma}=\prox_{\gamma f}$
for some $\gamma>0$ and $\Proj_{C(\epsilon)} = \prox_{g}$ is the
orthogonal projection onto the tube $C(\epsilon)$. From
\cite{combettes2004smi}, one can show that the sequence $(u^{(t)})_{t
  \in \mathbb{N}}$ converges to some point $u^*$ and $x^* =
\Proj_{C(\epsilon)}(u^*)$ is the solution of \iBPDN.

We may compute that $S_{\gamma} z=\prox_{\gamma f} z$ is actually the
component-wise soft-thresholding operator of $z$ on $T^c$,
\ie $(S_{\gamma} z)_i = \sign z_i (|z_i| - \gamma)_+$ if $i\in T^c$
and $z_i$ if $i\in T$, with, for $\lambda\in\Rbb$, $(\lambda)_+ =
\lambda$ if $\lambda\geq 0$ and 0 else.  Efficient ways to compute
$\Proj_{C(\epsilon)}$ are also given in \cite{Fadili2009}.

\section{Conclusion}
\label{sec:conclusion}

This short note has studied the modification of Compressed Sensing
introduced in \cite{vaswani5066modified,Vaswani2009}, \ie when the
signal sparsity assumption is increased by the knowledge of a part of
its support. We showed theoretically that a simple generalization of
the common Basis Pursuit DeNoise program, \ie the \emph{innovative}
BPDN, has similar stability guarantees than BPDN with respect to both
signal compressibility and noisy measurements. Interestingly, the
obtained requirements are related to the conclusion of
\cite{Davenport_M_2010_article_sigproccompress} when the
cancel-then-recover strategy is applied to the context of this paper.

In the future, we plan to investigate possible numerical applications
of this formalism. In particular, when \iBPDN\ is integrated to the
reconstruction of sequences of sparse or compressible signals, we
would like to assess the quality of the reconstruction in function of
the number of measurements when the amount of innovation, \ie the
ratio between the unknown and the known signal support parts, can be
quantified over time.

\section{Acknowledgements}

We are very grateful to Prof. Pierre Vandergheynst (Signal Processing
Laboratory, LTS2/EPFL, Switzerland) for his useful advices and his
hospitality during their postdoctoral stay in EPFL.

\end{document}